\documentclass[conference]{IEEEtran}
\IEEEoverridecommandlockouts
%
\IEEEoverridecommandlockouts
%
\usepackage{cite}
  \usepackage[dvips]{graphicx}
  \DeclareGraphicsExtensions{.eps}
\usepackage[cmex10]{amsmath}

\usepackage{amssymb}
\usepackage{latexsym,amsfonts}
%
\usepackage{algorithmic}

%
\usepackage{array}

\usepackage[caption=false,font=footnotesize]{subfig}
%
\usepackage{fixltx2e}

\usepackage[dvips]{color}

\hyphenation{op-tical net-works semi-conduc-tor}

\newcounter{MYtempeqncnt}

\newtheorem{theorem}{Theorem}
\newtheorem{lemma}[theorem]{Lemma}

\newtheorem{corollary}[theorem]{Corollary}

\newcommand{\E}{\mathbb{E}}

\begin{document}
\title{Chernoff Bounds for Analysis of Rate-Compatible Sphere-Packing with Numerous Transmissions
\thanks{This research was carried out in part at the Jet Propulsion Laboratory, California Institute of Technology, under a contract with NASA.
}
}


\author{\IEEEauthorblockN{Tsung-Yi Chen}
\IEEEauthorblockA{Department of Electrical Engineering\\
University of California, Los Angeles\\
Los Angeles, California 90024
}
\and
\IEEEauthorblockN{Dariush Divsalar}
\IEEEauthorblockA{Jet Propulsion Laboratory\\
California Institute of Technology\\
Pasadena, California 91109
\and
\IEEEauthorblockN{Richard D. Wesel}
\IEEEauthorblockA{Department of Electrical Engineering\\
University of California, Los Angeles\\
Los Angeles, California 90024
}
}
}


\maketitle

\begin{abstract}
\boldmath
Recent results by Chen et al. and Polyanskiy et al. explore using feedback to approach capacity with short blocklengths. This paper explores Chernoff bounding techniques to extend the rate-compatible sphere-packing (RCSP) analysis proposed by Chen et al. to scenarios involving numerous retransmissions and different step sizes in each incremental retransmission.  Williamson et al. employ exact RCSP computations for up to six transmissions.  However, exact RCSP computation with more than six retransmissions becomes  unwieldy because of joint error probabilities involving numerous chi-squared distributions. This paper explores Chernoff approaches for upper and lower bounds to provide support for computations involving more than six transmissions.

We present two versions of upper and lower bounds for the two-transmission case. One of the versions is extended to the general case of $m$ transmissions where $m \geq 1$. Computing the general bounds requires minimization of exponential functions with the auxiliary parameters, but is less complex and more stable than multiple rounds of numerical integration. These bounds also provide a good estimate of the expected throughput and expected latency, which are useful for optimization purposes. 
\end{abstract}

\section{Introduction}

\subsection{Previous Work}
It is well known that feedback can significantly improve the error exponent, \cite{Chang_1956, Kramer_1969, Zig_1970}. Under AWGN channel with noiseless feedback, the Shalkwijk and Kailath  (SK) coding scheme achieves error probability with doubly exponential decay\cite{Schalkwijk_1966_1}\cite{Schalkwijk_1966_2}. The SK scheme can be derived elegantly with the Elias result as shown in \cite{Gallager_2010}. The SK scheme is simple and efficient but requires full knowledge of the signal seen by the receiver to be communicated to the transmitter via feedback. 

On the other hand, Polyanskiy et al. \cite{PolyIT11} show that full information through feedback is not necessary to achieve throughput close to capacity with low latency. Chen et al. \cite{Chen_2010_ITA} also show by simulation that a simple incremental redundancy scheme with feedback will allow a convolutional code with blocklength less than $200$ to perform close to an LDPC code with blocklength close to $2000$. 

The Rate-Compatible Sphere-Packing (RCSP) analysis was first proposed in \cite{Chen_2011_ICC} as an analytic tool to characterize the capacity-achieving potential of Hybrid ARQ systems.  Both analysis and simulation results show that a simple feedback scheme using (ACK/NACK) with incremental redundancy allows the system to achieve $90$\% of the capacity with an average latency less than $100$ symbols. 

Achievability and converse bounds of variable length coding shown in \cite{PolyIT11} reveal similar results of significant latency reduction when a noiseless feedback is presence. In particular, an example for the binary symmetric channel (BSC) shows that achieving $90$\% of the capacity requires only less than $200$ symbols. 
 
\subsection{Main Contribution}

Chen et al. \cite{Chen_2011_ICC} provide an approximation formula to compute the joint error probability of multiple transmissions with fixed step sizes.  Computing the probability by numerical integration is also possible for small number of transmissions \cite{Williamson_ISIT_2012}. As the number of transmissions grows, however, both the approximation and numerical method become unwieldy for optimization purposes. 

This paper provides lower and upper bounds on the relevant joint error probability for RCSP. The upper and lower bounds are given as infimums of closed form functions that require much less computation power. For the two-transmission case, two versions of upper and lower bounds are derived. The version where Chernoff bounds are used can be generalized to the $m$-transmission case. 

The results can be translated into upper and lower bounds on the expected throughput and expected latency. These bounds provide tight estimates of the expected latency and can be used to optimize the transmission rate and blocklengths for practical incremental redundancy schemes for general $m$ transmissions. Examples in Section \ref{sec:Examples} show that relaxing the bounds to several pairs of joint error events with a suboptimal but closed form auxiliary parameter already gives very sharp results.

\section{Problem Statement}
\label{sec:PS}
In this paper we consider a communication system under AWGN channel with a noiseless feedback. The use of the feedback link in our system is minimal: sending one bit of information for each block of forward transmission to confirm whether the message is received correctly. If the transmission is not successful, the transmitter will retransmit a block of non-repetitive incremental redundancy. The transmitter will attempt up to $m$ transmissions (including the initial transmission). After the $m$th transmission the transmitter will restart from the initial transmission. 

The receiver uses a spherical bounded distance decoder defined as follows: consider an $(M, n)$ code where $M$ is the number of messages and $n$ is the blocklength. The decoder maps the received sequence $Y_1^n$ to the codeword that is within the decoding radius $r$ (in terms of Euclidean distance). If there are more than one codewords or there is no codeword within the distance, the decoder will declare an error. 

For systems with feedback where only limited number of transmissions can be permitted, RCSP analysis (which assumes that the decoding radius $r$ is what would be achieved by ideal sphere packing) provides practical guidance of the optimal transmission rate and blocklength.  This was demonstrated in \cite{Williamson_ISIT_2012}.  One of the issues in using RCSP to optimize transmission rates and blocklengths is the complexity of performing numerical integration to compute the joint error probability. In \cite{Williamson_ISIT_2012}, exact computations were made, but this was only possible for up to six transmissions.  In \cite{Chen_2011_ICC},  an approximation formula based on the i.i.d. assumption gives an accurate estimate when the step sizes are large (which supports the i.i.d. assumption). 

This paper gives tight upper and lower bounds that allow a large number of transmissions and a relatively small step size.  These bounds are in closed form (or are optimizations of closed form functions). The main results in the following sections are expressed in terms of the decoding radius $r_i$ of the codeword received at the $i$th retransmission. These results can then be evaluated by replacing $r_i$ with proper expression according to different assumptions. We always keep the cumulative distribution functions (CDFs) although they can also be bounded if desired.

\section{Main Results}
\label{sec:Results}
\subsection{Bounds on Joint Error Event: Two-Transmission}
This section summarizes the upper and lower bounds of the joint error probability using a spherical bounded-distance decoder under AWGN channel. Some of the proofs are given in the Appendix.

Let $N_1 = I_1$ be the blocklength of the initial transmission and $I_i$ be the blocklength of the incremental redundancy transmitted at the $i$th transmission. The number of accumulated symbols at $i$th transmission is then $N_i = N_{i-1} + I_{i}$. Denote the bounded-distance decoding radius for the $i$th transmission as $r_i$. 


Suppose without loss of generality that the noise samples $z_i$ are i.i.d. and $z_i\sim \mathcal{N}(0, 1)$. The error event $\zeta_i$ of the $i$th transmission is given as $\zeta_i = \left\{\sum\limits_{i = 1}^{N_i}z_i^2 > r_i^2 \right\}$. The probability of each error event is simply the tail of a chi-square random variable: $\Pr(\zeta_i) = \Pr\left(\chi_{N_i}^2 > r_i^2\right)$. 

Because of the dependency between $\zeta_i$'s, the probabilities of the joint events can only be expressed by integration. Take the two-transmission case for example, the joint error probability is given as 

\begin{align}
	&\Pr(\zeta_1\cap\zeta_2) 
	\label{eqn:Integrate1}
	= \int_{r_1^2}^{\infty} \Pr\left(\chi_{I_2}^2 > r_2^2-t\right)f_{\chi_{I_1}^2}(t) dt
	\\ 
	\label{eqn:Integrate2}
	=& \int_{r_1^2}^{r_2^2} \Pr\left(\chi_{I_2}^2 > r_2^2-t\right)f_{\chi_{I_1}^2}(t) dt +  \Pr\left(\chi_{I_1}^2 > r_2^2\right)
\end{align}
where $ f_{\chi_{n}^2}(t)$ is the density function of a chi-square distribution with $n$ degrees of freedom.

We first summarize the two versions of upper and lower bounds for the two-transmission case. The first version of the upper and lower bounds uses the Chernoff bound.  The following lemma states these upper and lower bounds for the two-transmission case.
\begin{lemma}
\label{lem:ChernoffTwoTrans}
\begin{align}
&\Pr(\zeta_1 \cap \zeta_2) \leq \inf\limits_{0\leq u < 1/2} \frac{e^{-u r_2^2}\Pr\left(\chi_{I_1}^2 > (1-2u)r_1^2\right)}{(1-2u)^{N_2/2}}
\\
&\Pr(\zeta_1 \cap \zeta_2) \geq \max\left(\Pr(\zeta_1)-w_1, \Pr(\zeta_2)-w_2 \right)
\\
&\text{where } w_1 = \inf\limits_{v\geq 0}\frac{e^{v r_2^2}\Pr\left(\chi_{I_1}^2 > (1+2v)r_1^2\right)}{(1+2v)^{N_2/2}},
\\ 
&w_2 = \inf\limits_{0\leq v\leq 1/2}\frac{e^{-v r_2^2}\Pr\left(\chi_{I_1}^2 \leq (1-2v)r_1^2\right)}{(1-2v)^{N_2/2}}.
\end{align}
\end{lemma}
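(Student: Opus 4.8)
The plan is to reduce everything to two independent chi-square variables and then apply Chernoff-style exponential bounds, evaluating the resulting ``partial'' moment generating functions in closed form. First I would introduce the decomposition $X = \sum_{i=1}^{N_1} z_i^2$ and $Y = \sum_{i=N_1+1}^{N_2} z_i^2$, so that $X \sim \chi^2_{I_1}$ and $Y \sim \chi^2_{I_2}$ are independent because they depend on disjoint blocks of the i.i.d.\ noise. In this notation $\zeta_1 = \{X > r_1^2\}$ and $\zeta_2 = \{X + Y > r_2^2\}$, so $\Pr(\zeta_1 \cap \zeta_2) = \Pr(X > r_1^2,\, X + Y > r_2^2)$.

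For the upper bound I would use the exponential inequality $\mathbf{1}\{X + Y > r_2^2\} \leq e^{u(X+Y-r_2^2)}$, valid for every $u \geq 0$. Multiplying by $\mathbf{1}\{X > r_1^2\}$ and taking expectations, independence of $X$ and $Y$ factors the bound into $e^{-u r_2^2}\, \E[e^{uY}]\, \E[\mathbf{1}\{X > r_1^2\} e^{uX}]$. The first factor is the chi-square MGF $\E[e^{uY}] = (1-2u)^{-I_2/2}$, finite when $u < 1/2$. The second is a partial MGF, which I expect to be the main technical step: absorbing $e^{uX}$ into the chi-square density replaces the exponential rate $1/2$ by $(1-2u)/2$, and after the substitution $t = (1-2u)x$ one recognizes it as $(1-2u)^{-I_1/2}\Pr(\chi^2_{I_1} > (1-2u) r_1^2)$. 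Combining the two factors and using $N_2 = I_1 + I_2$ yields the stated summand; taking the infimum over $0 \leq u < 1/2$ gives the upper bound.

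For the lower bound I would use the identities $\Pr(\zeta_1 \cap \zeta_2) = \Pr(\zeta_1) - \Pr(\zeta_1 \cap \zeta_2^c)$ and $\Pr(\zeta_1 \cap \zeta_2) = \Pr(\zeta_2) - \Pr(\zeta_2 \cap \zeta_1^c)$, then Chernoff-bound the two complementary probabilities. For $\Pr(\zeta_1 \cap \zeta_2^c) = \Pr(X > r_1^2,\, X + Y \leq r_2^2)$ the relevant inequality is $\mathbf{1}\{X + Y \leq r_2^2\} \leq e^{v(r_2^2 - X - Y)}$ with $v \geq 0$; the identical factoring-and-substitution argument produces $w_1$, now with $(1+2v)$ in place of $(1-2u)$ because the tilt $e^{-vX}$ increases the exponential rate to $(1+2v)/2$, and with the tail $\Pr(\chi^2_{I_1} > (1+2v) r_1^2)$. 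For $\Pr(\zeta_2 \cap \zeta_1^c) = \Pr(X \leq r_1^2,\, X + Y > r_2^2)$ I would again use $\mathbf{1}\{X + Y > r_2^2\} \leq e^{v(X + Y - r_2^2)}$, but now the surviving indicator is $\mathbf{1}\{X \leq r_1^2\}$, so the partial MGF integral runs over $[0, r_1^2]$ and produces the chi-square CDF $\Pr(\chi^2_{I_1} \leq (1-2v) r_1^2)$, giving $w_2$ after the infimum over $0 \leq v < 1/2$. Subtracting from $\Pr(\zeta_1)$ and $\Pr(\zeta_2)$ respectively and taking the larger of the two bounds completes the argument.

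The routine algebra is the change of variables in each partial MGF; the only points requiring care are selecting the correct one-sided exponential inequality for each of the four events and keeping the tilted exponential rate positive, which forces $u < 1/2$ for the upper bound and $v < 1/2$ for $w_2$, while $v \geq 0$ already suffices for $w_1$. I expect the chief conceptual obstacle to be the partial MGF evaluation, namely recognizing that exponentially tilting a chi-square variable merely rescales both the degrees-of-freedom normalization and the threshold, so that the truncated expectation collapses back to a scaled chi-square tail or CDF; everything else follows directly from independence and the two elementary Chernoff inequalities.
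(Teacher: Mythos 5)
Your proposal is correct and follows essentially the same route as the paper: Chernoff-tilt the second-stage tail, exploit independence of the two noise blocks, and evaluate the resulting partial MGF by a change of variables that rescales the chi-square threshold by $(1-2u)$ (resp. $(1+2v)$), then obtain the lower bound from the two complementary decompositions $\Pr(\zeta_1)-\Pr(\zeta_1\cap\zeta_2^c)$ and $\Pr(\zeta_2)-\Pr(\zeta_2\cap\zeta_1^c)$. The only differences are cosmetic (you work with expectations of indicators rather than the explicit integral in \eqref{eqn:Integrate1}), and you correctly identify $w_2$ as bounding $\Pr(\zeta_2\cap\zeta_1^c)$, where the paper's text contains a typo.
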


Further bounding the expressions of  Lemma \ref{lem:ChernoffTwoTrans} is possible and can yield  convex functions, but the $u$ that optimizes these convex functions does not necessarily give the best bound in Lemma \ref{lem:ChernoffTwoTrans}.  Instead, we use a suboptimal but insightful parameter $u^* = (1 - N_2/r_2^2)/2$. Let $c_2 = r_2^2/N_2$. The upper bound, for example, then becomes 
\begin{align}
\label{eqn:sumoptimal_u}
\exp\left(-N_2(c_2 - 1 - \ln c_2)/2\right)\Pr\left(\chi_{I_1}^2 > r_1^2/c_2\right).
\end{align}
Assuming perfect sphere-packing (see Section \ref{sec:SpherePacking}), the radius-adjusting parameter $c_2$ is always greater than $1$ ( hence $u^* < 1/2$ ) if the code rate is less than capacity. Note that our choice of $u^*$ does optimize the Chernoff upper bound for $\Pr(\zeta_2)$ and gives the expression $\exp\left(-N_2(c_2 - 1 - \ln c_2)/2\right)$. Equation \eqref{eqn:sumoptimal_u} says that  $\Pr(\zeta_1, \zeta_2)$ is approximately $\Pr(\zeta_2)$  multiplied by the probability of the first error event but with squared radius $r_1^2$ divided by the factor $c_2$.

It is observed in \cite{Chen_2011_ICC} that the first few transmissions should have rates slightly above capacity to achieve the best expected throughput with feedback. The above Chernoff bounds give trivial results when the rate is above capacity, hence we provide a second version of the upper and lower bounds based on the results by Inglot\cite{Inglot2006}.  We first state the theorem given by Inglot:

\begin{theorem}[Inlgot\cite{Inglot2006}]
\label{thm:Inglot}
Let $\chi^2_k$ denote a random variable with central chi-square distribution and $k$ degrees of freedom. For $k \geq 2, r > k-2$,
	\begin{align}
	\label{eqn:TailofChiSqr}
	\frac{1}{2}\mathcal{E}_k(r)\leq \Pr(\chi^2_k > r) \leq \frac{1}{\sqrt{\pi}}\frac{r}{r-k+2}\mathcal{E}_k(r)
	\end{align}
	where $\mathcal{E}_k(r) = \exp\left\{\frac{-1}{2}\left[ r - k - (k-2)\log(r/k) + \log k \right] \right\}$
\end{theorem}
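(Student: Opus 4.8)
The plan is to work directly with the chi-square density $f_k(t) = \frac{t^{k/2-1}e^{-t/2}}{2^{k/2}\Gamma(k/2)}$ and write the tail as $\Pr(\chi^2_k > r) = \frac{1}{2^{k/2}\Gamma(k/2)}\int_r^\infty t^{k/2-1}e^{-t/2}\,dt$. The key observation driving the whole argument is that $\mathcal{E}_k(r)$ is, up to the constant $2\sqrt{\pi}$, the Stirling approximation of $2\sqrt{\pi}\,f_k(r)$: a short calculation gives $\mathcal{E}_k(r)=e^{-(r-k)/2}r^{(k-2)/2}k^{-(k-1)/2}$, and replacing $\Gamma(k/2)$ by its Stirling value $\sqrt{2\pi}(k/2)^{(k-1)/2}e^{-k/2}$ in $2\sqrt{\pi}\,f_k(r)$ reproduces $\mathcal{E}_k(r)$ exactly. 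I would therefore decouple the problem into two independent estimates: (i) bounding the tail integral $\int_r^\infty t^{k/2-1}e^{-t/2}\,dt$ against the boundary value $r^{k/2-1}e^{-r/2}$, and (ii) bounding the normalizer $2^{k/2}\Gamma(k/2)$ by its two-sided Stirling form.

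For the upper bound I would substitute $t=r+s$ and estimate $t^{k/2-1}=r^{k/2-1}(1+s/r)^{k/2-1}\le r^{k/2-1}e^{(k/2-1)s/r}$ using $1+x\le e^x$. The residual integral $\int_0^\infty e^{-s(1/2-(k/2-1)/r)}\,ds$ is a convergent geometric-type integral precisely when $r>k-2$ — this is exactly where the hypothesis $r>k-2$ enters — and evaluates to $\frac{2r}{r-k+2}$. Pairing this with the lower Stirling bound $\Gamma(x)\ge\sqrt{2\pi}\,x^{x-1/2}e^{-x}$, so that $2^{k/2}\Gamma(k/2)\ge 2\sqrt{\pi}\,k^{(k-1)/2}e^{-k/2}$, produces exactly $\frac{1}{\sqrt{\pi}}\frac{r}{r-k+2}\mathcal{E}_k(r)$ once the powers of $r$ and $k$ and the factor $e^{-(r-k)/2}$ recombine.

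For the lower bound I would instead use the crude monotonicity estimate $t^{k/2-1}\ge r^{k/2-1}$ valid for $t\ge r$ and $k\ge 2$ — this is where $k\ge 2$ is needed, so that the exponent $k/2-1$ is nonnegative — which gives $\int_r^\infty t^{k/2-1}e^{-t/2}\,dt\ge 2r^{k/2-1}e^{-r/2}$. Pairing this with the upper Stirling bound $\Gamma(x)\le\sqrt{2\pi}\,x^{x-1/2}e^{-x}e^{1/(12x)}$ and carrying out the same recombination reduces the target inequality $\Pr(\chi^2_k>r)\ge\frac12\mathcal{E}_k(r)$ to the purely numerical statement $2\sqrt{\pi}\,e^{1/(6k)}\le 4$, i.e. $e^{1/(6k)}\le 2/\sqrt{\pi}$.

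The main obstacle is pinning down the numerical constants $\frac12$ and $\frac1{\sqrt\pi}$, not the integration, which is routine. The lower-bound constant is the delicate one: $\ln(2/\sqrt\pi)\approx 0.12$ while $\tfrac{1}{6k}$ is largest at the boundary $k=2$, where it equals $\tfrac1{12}\approx 0.083<0.12$, so the inequality holds but with little slack. I would therefore verify the extreme case $k=2$ explicitly, where $\chi^2_2$ is exponential and $\Pr(\chi^2_2>r)=e^{-r/2}$ in closed form, both to confirm the constants and to guard against an off-by-a-constant error in the Stirling bookkeeping.
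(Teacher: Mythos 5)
Your proposal is correct, and there is nothing in the paper to compare it against: Theorem~\ref{thm:Inglot} is imported verbatim from Inglot's paper and the authors give no proof of it, so your argument stands as a self-contained derivation of a cited external result. The bookkeeping checks out. Writing $\mathcal{E}_k(r)=e^{-(r-k)/2}r^{(k-2)/2}k^{-(k-1)/2}$ is right, and with the two-sided Stirling form $\Gamma(x)=\sqrt{2\pi}\,x^{x-1/2}e^{-x}e^{\mu(x)}$, $0<\mu(x)<1/(12x)$, one gets $2^{k/2}\Gamma(k/2)=2\sqrt{\pi}\,k^{(k-1)/2}e^{-k/2}e^{\mu(k/2)}$, so $\mathcal{E}_k(r)=2\sqrt{\pi}e^{\mu(k/2)}f_k(r)$ exactly as you intend. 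Your upper bound via $t=r+s$ and $(1+s/r)^{k/2-1}\le e^{(k-2)s/(2r)}$ (legitimate since $k/2-1\ge 0$) gives the residual factor $2r/(r-k+2)$ with convergence exactly when $r>k-2$, and combining with $\Gamma(k/2)\ge\sqrt{2\pi}(k/2)^{(k-1)/2}e^{-k/2}$ recovers $\frac{1}{\sqrt{\pi}}\frac{r}{r-k+2}\mathcal{E}_k(r)$. Your lower bound reduces to $e^{1/(6k)}\le 2/\sqrt{\pi}$, and since $1/(6k)\le 1/12\approx 0.083<\ln(2/\sqrt{\pi})\approx 0.121$ for $k\ge 2$, the constant $\tfrac12$ survives, with the tightest margin at $k=2$ as you observe; your closed-form check $\Pr(\chi^2_2>r)=e^{-r/2}$ versus $\tfrac12\mathcal{E}_2(r)=\frac{e}{2\sqrt{2}}e^{-r/2}\approx 0.961\,e^{-r/2}$ confirms it. The only cosmetic issue is the sentence describing $\mathcal{E}_k(r)$ as ``up to the constant $2\sqrt{\pi}$, the Stirling approximation of $2\sqrt{\pi}f_k(r)$,'' which double-counts the constant; the computation immediately following it is the correct statement.
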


Theorem \ref{thm:Inglot} gives the following result for the two-transmission case. To simplify the equation let $p \equiv \Pr\left(\chi_{I_1}^2 > r_2^2\right)$.

\begin{theorem}
\label{thm:InglotTwoTrans}
\begin{align}
&p + \frac{\sqrt{\pi}K}{2} \int_{r_1^2}^{r_2^2} (r_2^2 - t)^{I_2/2-1}t^{I_1/2-1}dt \leq \Pr(\zeta_1 \cap \zeta_2)\nonumber
\\
 &\leq  p + \inf\limits_{\delta \in (\underline{\delta} ,\overline{\delta})} K\int_{r_1^2}^{(1-\delta)r_2^2} g(t) dt  + 
 \int_{(1-\delta)r_2^2}^{r_2^2} f_{\chi_{I_1}^2}(t) dt 
\end{align}
where  $\overline{\delta}, \underline{\delta},g(t)$ and $K$ are described in detail in the Appendix.
\end{theorem}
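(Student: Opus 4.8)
The plan is to start from the exact decomposition \eqref{eqn:Integrate2}, which already isolates the term $p = \Pr(\chi_{I_1}^2 > r_2^2)$, and to bound the remaining integral
\[
\int_{r_1^2}^{r_2^2} \Pr\!\left(\chi_{I_2}^2 > r_2^2-t\right) f_{\chi_{I_1}^2}(t)\, dt
\]
by replacing the inner chi-square tail with the two-sided estimate of Theorem~\ref{thm:Inglot}. The guiding observation is that $\mathcal{E}_{I_2}(r_2^2-t)$ contributes a factor $(r_2^2-t)^{I_2/2-1}e^{-(r_2^2-t)/2}$, while the density $f_{\chi_{I_1}^2}(t)$ contributes $t^{I_1/2-1}e^{-t/2}$, so the two exponentials collapse to the $t$-independent factor $e^{-r_2^2/2}$. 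All remaining prefactors (the gamma normalizer, the powers of $2$ and $I_2$, and $e^{-r_2^2/2}$) are constant in $t$; collecting them defines the single constant $K$ and leaves behind the polynomial kernel $(r_2^2-t)^{I_2/2-1}t^{I_1/2-1}$ that appears in both bounds.

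For the lower bound I would apply the left inequality $\tfrac12 \mathcal{E}_{I_2}(r_2^2-t) \le \Pr(\chi_{I_2}^2 > r_2^2-t)$ pointwise under the integral. After factoring out the consolidated constant, the integrand becomes exactly $\tfrac{\sqrt{\pi}K}{2}(r_2^2-t)^{I_2/2-1}t^{I_1/2-1}$, which reproduces the stated lower bound over the full range $[r_1^2, r_2^2]$.

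For the upper bound the factor $\tfrac{s}{s-I_2+2}$ in Inglot's right inequality (with $s = r_2^2-t$) diverges as $s \downarrow I_2-2$, i.e.\ as $t \uparrow r_2^2-(I_2-2)$, so the estimate cannot be carried up to $t = r_2^2$. I would therefore split the integral at $(1-\delta)r_2^2$. On $[r_1^2, (1-\delta)r_2^2]$ the bound $s = r_2^2 - t \ge \delta r_2^2 > I_2-2$ holds once $\delta > \underline{\delta}$, where the natural threshold is $\underline{\delta} = (I_2-2)/r_2^2$, so Inglot's upper bound applies and, after consolidating constants into the same $K$, produces the integrand $K\,g(t)$ with $g(t) = \tfrac{r_2^2-t}{\,r_2^2-t-I_2+2\,}(r_2^2-t)^{I_2/2-1}t^{I_1/2-1}$. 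On the remaining interval $[(1-\delta)r_2^2, r_2^2]$ I use the trivial bound $\Pr(\chi_{I_2}^2 > r_2^2-t) \le 1$, contributing $\int_{(1-\delta)r_2^2}^{r_2^2} f_{\chi_{I_1}^2}(t)\,dt$. Requiring the split point to lie in $[r_1^2, r_2^2]$ forces $\delta < \overline{\delta} = 1 - r_1^2/r_2^2$; since every admissible $\delta \in (\underline{\delta}, \overline{\delta})$ yields a valid upper bound, taking the infimum over this interval sharpens the estimate and gives the claimed right-hand side.

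The main obstacle, and the step requiring the most care, is the bookkeeping that defines $K$: one must verify that after substituting $\mathcal{E}_{I_2}(r_2^2-t)f_{\chi_{I_1}^2}(t)$ the exponential dependence on $t$ cancels exactly, so that $K$ is genuinely constant in $t$ and \emph{identical} in the two bounds (only the factor $\tfrac{s}{s-I_2+2}$ distinguishes the upper integrand). A secondary subtlety is the validity range: the upper-bound split is dictated precisely by the hypothesis $r > k-2$ of Theorem~\ref{thm:Inglot}, and for the lower bound one should check that $\tfrac12 \mathcal{E}_{I_2}(r_2^2-t)$ remains a legitimate lower bound (or is harmlessly small, since the kernel vanishes at $t=r_2^2$ when $I_2>2$) on the portion of $[r_1^2, r_2^2]$ where $r_2^2 - t \le I_2 - 2$, so that integrating over the full range is justified.
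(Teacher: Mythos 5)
Your proof is correct and follows essentially the same route as the paper's: apply Inglot's lower bound pointwise over all of $[r_1^2, r_2^2]$ after cancelling the exponentials against $f_{\chi_{I_1}^2}$, and for the upper bound split at $(1-\delta)r_2^2$, use Inglot on the left piece and the trivial bound $\Pr(\chi_{I_2}^2 > r_2^2-t)\le 1$ on the right, with the same $K$, $g$, $\underline{\delta}$, and $\overline{\delta}$. You are in fact slightly more careful than the paper in flagging that Inglot's lower bound is only stated for $r > k-2$, so its validity (or harmlessness) on the sub-interval where $r_2^2 - t \le I_2 - 2$ deserves the check you describe.
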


Although Theorem \ref{thm:InglotTwoTrans} cannot be generalized to the $m$-transmission case, numerical results show that the joint error probability on two events already gives surprisingly tight bounds (details are discussed in Section \ref{sec:Examples}). Hence Theorem \ref{thm:InglotTwoTrans} may still be useful to obtain even tighter bounds especially when the rate is slightly higher than the capacity. The integral in Theorem \ref{thm:InglotTwoTrans} can be expressed in terms of an Appell hypergeometric function of two variables in closed form. See details in \cite{ChenDraft2012}.
\subsection{Bounds on the Joint Error Event: $m$ Transmissions}
\label{sec:GeneralCase}
In the general case where $m$ transmissions are allowed, there are $m-1$ step sizes $I_2, \dots, I_m$ and $N_1 = I_1, N_i = N_{i-1}+ I_i $ for $i > 1$. The joint error probability can be expressed by \eqref{eqn:Pzeta_i}.

The following results give the upper and lower bounds based on Chernoff bounds for the $m$-transmission joint error probability:

\begin{theorem}
\label{thm:GeneralUpper}
Let $u_i<1/2, i = 1, 2, \dots, m$ be the parameters for each use of Chernoff bound in the integral. Define $h_i, g_i(u_1^m)$ by the following recursion:
\begin{align*}
h_1 &= u_1, h_i = h_{i-1}+u_i(1-2h_{i-1}),
\\
g_1 &= e^{-u_1r_m^2}(1-2u_1)^{\frac{-I_{m}}{2}},
\\
g_i &= g_{i-1}e^{-u_i(1-2h_{i-1})r_{m-i+1}^2}\left(1-2h_{i-1}\right)^{\frac{-I_{m-i+1}}{2}}.
\end{align*}
Note the property that $1-2h_{i} = \prod_{j \leq i} (1-2u_j)$. We have
\small
\begin{align}
&\Pr\left(\bigcap\limits_{i = 1}^m \zeta_i \right) 
\leq \inf\limits_{u_1^m} \frac{g_{m-1}(u_1^m)\Pr\left(\chi_{I_1}^2 > (1-2h_{m-1})r_1^2\right) }{(1-2h_{m-1})^{I_1/2}}
\end{align}
\normalsize
\end{theorem}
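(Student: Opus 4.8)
The plan is to write the joint error probability as an iterated expectation over the independent per-transmission noise energies and to peel off the transmissions one at a time, from the $m$th down to the second, carrying an exponentially tilted weight that encodes the dependence between consecutive events. Set $X_i = \sum_{j=N_{i-1}+1}^{N_i} z_j^2$ (with $N_0=0$), so that the $X_i$ are independent, $X_i\sim\chi^2_{I_i}$, the accumulated energy is $S_i=X_1+\dots+X_i$, the error event is $\zeta_i=\{S_i>r_i^2\}$, and $\Pr(\bigcap_{i=1}^m\zeta_i)=\E[\prod_{i=1}^m\mathbf{1}[\zeta_i]]$.

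The engine of the argument is a single-step bound: for $X\sim\chi^2_k$, any threshold $a$, and parameters $0\le h<1/2$, $0\le u<1/2$,
\begin{align*}
\E\!\left[e^{hX}\mathbf{1}[X>a]\right]\le (1-2h')^{-k/2}\,e^{-u(1-2h)a},\qquad h'=h+u(1-2h).
\end{align*}
I would prove this by the exact exponential change of measure $\E[e^{hX}\mathbf{1}[X>a]]=(1-2h)^{-k/2}\Pr(\chi^2_k>(1-2h)a)$, obtained by recognizing the tilted integrand as a rescaled chi-square density, followed by the ordinary Chernoff bound $\Pr(\chi^2_k>b)\le e^{-ub}(1-2u)^{-k/2}$ with $b=(1-2h)a$, using $(1-2h)(1-2u)=1-2h'$.

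Next I would show by induction on $i$ that after peeling the last $i$ transmissions,
\begin{align*}
\Pr\!\left(\bigcap_{j=1}^m\zeta_j\right)\le g_i\,\E\!\left[\mathbf{1}\!\left[\bigcap_{j=1}^{m-i}\zeta_j\right]e^{h_i S_{m-i}}\right],
\end{align*}
starting from $g_0=1$, $h_0=0$. For the inductive step I condition on $X_1,\dots,X_{m-i}$ and integrate out the single new increment $X_{m-i+1}$; since $\zeta_{m-i+1}=\{X_{m-i+1}>r_{m-i+1}^2-S_{m-i}\}$ and the earlier events do not involve $X_{m-i+1}$, the inner integral is exactly $\E[e^{h_{i-1}X_{m-i+1}}\mathbf{1}[X_{m-i+1}>r_{m-i+1}^2-S_{m-i}]]$. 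Applying the single-step bound with $h=h_{i-1}$, $u=u_i$, and $a=r_{m-i+1}^2-S_{m-i}$, then splitting $e^{-u_i(1-2h_{i-1})a}$ into the constant $e^{-u_i(1-2h_{i-1})r_{m-i+1}^2}$ and the factor $e^{u_i(1-2h_{i-1})S_{m-i}}$, reproduces the recursion $h_i=h_{i-1}+u_i(1-2h_{i-1})$ and the multiplicative update $(1-2h_{i-1})^{-I_{m-i+1}/2}(1-2u_i)^{-I_{m-i+1}/2}e^{-u_i(1-2h_{i-1})r_{m-i+1}^2}=(1-2h_i)^{-I_{m-i+1}/2}e^{-u_i(1-2h_{i-1})r_{m-i+1}^2}$ defining $g_i$ (consistent with the stated $g_1$, since $1-2h_1=1-2u_1$), while the carried exponent combines as $h_{i-1}S_{m-i}+u_i(1-2h_{i-1})S_{m-i}=h_iS_{m-i}$. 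Terminating at $i=m-1$, where $S_1=X_1$ and $\zeta_1=\{X_1>r_1^2\}$, I apply the exact change-of-measure identity once more---keeping the CDF rather than Chernoff bounding it---to get $\E[e^{h_{m-1}X_1}\mathbf{1}[X_1>r_1^2]]=(1-2h_{m-1})^{-I_1/2}\Pr(\chi^2_{I_1}>(1-2h_{m-1})r_1^2)$, which is precisely the claimed expression; taking the infimum over $u_1,\dots,u_{m-1}$ finishes the proof.

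The only genuinely delicate point is the single-step bound, which must decouple $X_{m-i+1}$ from the earlier increments while remaining valid pointwise in $S_{m-i}$ (including when the effective threshold $r_{m-i+1}^2-S_{m-i}$ is negative) so that it can be pulled inside the outer expectation. The remaining care is bookkeeping: transmissions are peeled in reverse order, so step $i$ handles transmission $m-i+1$, and one must check that the carried tilt stays admissible. Because each $u_i\in[0,1/2)$ and $1-2h_i=\prod_{j\le i}(1-2u_j)>0$, one has $0\le h_i<1/2$ throughout, so every moment generating function invoked is finite and every Chernoff step is valid---exactly the range imposed in the statement.
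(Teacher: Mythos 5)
Your proof is correct and is essentially the argument the paper intends: the paper's own proof of this theorem is a one-line sketch deferring to the two-transmission case, where a single round of the same step is carried out --- your exact tilting identity $\E[e^{hX}\mathbf{1}[X>a]]=(1-2h)^{-k/2}\Pr(\chi^2_k>(1-2h)a)$ is precisely the paper's change of variable $z_i'=(1-2h)^{1/2}z_i$ --- and you have simply iterated it with the induction and bookkeeping made explicit. One substantive point your derivation surfaces: the update you obtain is $g_i=g_{i-1}e^{-u_i(1-2h_{i-1})r_{m-i+1}^2}(1-2h_i)^{-I_{m-i+1}/2}$, with $h_i$ rather than the $h_{i-1}$ printed in the theorem's recursion; your version is the one consistent with the stated base case $g_1$ (taking $g_0=1$, $h_0=0$) and with Lemma \ref{lem:ChernoffTwoTrans} at $m=2$, so the index in the printed recursion appears to be a typo that your proof implicitly corrects (as written, the printed recursion would drop the factor $(1-2u_i)^{-I_{m-i+1}/2}>1$ and would not remain a valid upper bound). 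Your attention to the case of a negative effective threshold $r_{m-i+1}^2-S_{m-i}<0$ and to the admissibility $0\le h_i<1/2$ covers exactly the details the paper's sketch glosses over but that are needed to pull the single-step bound through the outer expectation.
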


Several versions of lower bounds can be obtained by different expansions and the recursion formulas follow closely to those in Theorem \ref{thm:GeneralUpper}. The following corollary gives an example of one specific expansion that yields a lower bound in a recursive fashion. The other formulas are omitted due to space limitations. See another example in Section~\ref{sec:Examples}.
\begin{corollary}
\label{cor:GeneralLower}
Write $\Pr\left(\cap_{1\leq i \leq m } \zeta_i \right) = \Pr\left(\cap_{2\leq i \leq m } \zeta_i \right) - \Pr\left(\cap_{2\leq i \leq m } \zeta_i \cap \zeta_1^c\right)$.  With the same recursion as in Theorem \ref{thm:GeneralUpper},  the lower bound is given in \eqref{eqn:GeneralLowerBound}. 

%
\end{corollary}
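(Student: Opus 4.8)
The plan is to combine an elementary set identity with the Chernoff peeling already developed for Theorem \ref{thm:GeneralUpper}, taking care that the subtracted term receives an \emph{upper} bound so that the difference is a valid \emph{lower} bound. Writing $A=\zeta_1$ and $B=\bigcap_{2\le i\le m}\zeta_i$, the identity $\Pr(A\cap B)=\Pr(B)-\Pr(B\cap A^c)$ is exactly the displayed decomposition. The first term $\Pr(\bigcap_{2\le i\le m}\zeta_i)$ is itself an $(m-1)$-fold joint error probability (now with initial blocklength $N_2$), so I would either keep it symbolic and apply the same decomposition recursively down to the base case $\Pr(\zeta_m)$, or lower-bound it directly; either way it enters with a $+$ sign and requires a lower bound, which is why the recursion alternates. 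The whole burden therefore falls on upper-bounding the complement term $\Pr\bigl(\bigcap_{2\le i\le m}\zeta_i\cap\zeta_1^c\bigr)$.

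For that term I would reuse the machinery behind Theorem \ref{thm:GeneralUpper} verbatim. Write it as $\E\bigl[\mathbf 1_{\zeta_1^c}\prod_{k=2}^m\mathbf 1_{\zeta_k}\bigr]$ with $S_k=\sum_{j=1}^{N_k}z_j^2$ and $\zeta_k=\{S_k>r_k^2\}$. Peeling the upper-tail events from the last one inward, I apply $\mathbf 1_{\{S_k>r_k^2\}}\le e^{\lambda_k(S_k-r_k^2)}$ for $\lambda_k\ge0$ and factor out the moment generating function of the fresh, independent block $W_k=\sum_{j=N_{k-1}+1}^{N_k}z_j^2\sim\chi^2_{I_k}$. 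Choosing the reparameterization $\lambda_k=u_k(1-2h_{k-1})$ reproduces the stated recursions for $h_i$ and $g_i$, the accumulated coefficient $h_{m-1}$ on $S_1$, and the property $1-2h_i=\prod_{j\le i}(1-2u_j)$. The only departure from Theorem \ref{thm:GeneralUpper} is the final factor: since $\zeta_1^c=\{S_1\le r_1^2\}$ is a \emph{lower} tail, the exact tilting identity $\E\bigl[e^{h S_1}\mathbf 1_{\{S_1\le r_1^2\}}\bigr]=(1-2h)^{-I_1/2}\Pr\bigl(\chi^2_{I_1}\le(1-2h)r_1^2\bigr)$, obtained by the substitution $s'=(1-2h)s$ in the chi-square density, replaces the upper-tail probability by its complementary CDF. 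This gives the Chernoff upper bound $\inf_{0\le u_i<1/2} g_{m-1}(1-2h_{m-1})^{-I_1/2}\Pr\bigl(\chi^2_{I_1}\le(1-2h_{m-1})r_1^2\bigr)$ on the complement term, and subtracting it from $\Pr(\bigcap_{2\le i\le m}\zeta_i)$ produces \eqref{eqn:GeneralLowerBound}. Since minimizing this upper bound maximizes, hence tightens, the lower bound, the $\inf$ runs in the same direction as in Theorem \ref{thm:GeneralUpper}.

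The routine calculations are the density substitution behind the tilting identity and the induction that turns one peeling step into the $h,g$ recursion; both mirror the two-transmission computation that already produces $w_2$ in Lemma \ref{lem:ChernoffTwoTrans}, which is precisely the $m=2$ special case of this corollary. The step I expect to be the main obstacle is the bookkeeping of signs and validity conditions: I must verify that peeling an upper-tail event forces $\lambda_k\ge0$ (equivalently $u_k\ge0$), that every partial coefficient satisfies $1-2h_i>0$ so each moment generating function and the final tilted integral converge, and that the independence of each fresh block $W_k$ from $\zeta_1^c$ (which depends only on $S_1$) genuinely lets the moment generating functions factor out at every step. Finally, when the first term is expanded recursively, the alternating signs must be tracked so that at each level it is the subtracted complement term that receives the Chernoff upper bound; this is guaranteed because every such term is again a product of upper-tail events intersected with a single lower-tail event, i.e. of exactly the form handled above.
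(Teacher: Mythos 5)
Your proposal is correct and follows essentially the same route as the paper: the set identity $\Pr(\cap_{1\le i\le m}\zeta_i)=\Pr(\cap_{2\le i\le m}\zeta_i)-\Pr(\cap_{2\le i\le m}\zeta_i\cap\zeta_1^c)$, Chernoff peeling with the $h_i,g_i$ recursion of Theorem \ref{thm:GeneralUpper} applied to the subtracted term, and the exact tilting/change-of-variable step that converts the final factor into the lower-tail CDF $\Pr\bigl(\chi^2_{I_1}\le(1-2h_{m-1})r_1^2\bigr)$ --- exactly the $m$-transmission generalization of the $w_2$ bound in Lemma \ref{lem:ChernoffTwoTrans}, which is all the paper itself invokes. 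Your added care about the sign of the subtracted term, the conditions $u_k\ge 0$ and $1-2h_i>0$, and the independence of the fresh blocks fills in details the paper leaves implicit.
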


\begin{figure*}
\setcounter{MYtempeqncnt}{\value{equation}}
\setcounter{equation}{13}
\small
\begin{align}
	\label{eqn:Pzeta_i}
	\Pr\left(\bigcap \limits _{j=1}^{m} \zeta_j\right)
	= \int_{r_1^2}^{\infty} \int_{r_2^2-t_1}^{\infty} \dots 
	\int_{r_{m-1}^2- \sum\limits_{j=1}^{m-2}t_j}^{\infty}  
	f_{\chi_{N_1}^2}(t_1) \dots f_{\chi_{I_{i-1}}^2}(t_{m-1}) 
	\Pr\left(\chi_{I_{m}}^2 > r_m^2 - \sum_{j=1}^{m-1}t_j\right) dt_{m-1} \dots dt_1
\end{align}
	\hrulefill
	\vspace*{-9pt}
	
\begin{align}
	\label{eqn:GeneralLowerBound}
	\Pr\left(\bigcap\limits_{i = 1}^m \zeta_i \right) 
	\geq 	\max\left(0, \Pr\left(\bigcap\limits_{i = 2}^m \zeta_i \right) 
	- \inf\limits_{u_1^m} \frac{g_{m-1}(u_1^m)\Pr\left(\chi_{I_1}^2 \leq (1-2h_{m-1})r_1^2\right) }{(1-2h_{m-1})^{I_1/2}}\right)
\end{align}
	\hrulefill
	\vspace*{-9pt}
\end{figure*}
\normalsize
%
\section{Application to RCSP}
\label{sec:SpherePacking}

For the RCSP analysis, we usually assume that at each decoding attempt, each received codeword will pack the sphere generated by the power constraint according to perfect sphere-packing.  We also consider a more pessimistic assumption using Minkowski's lower bound. 

Consider an $(M, n)$ code on the AWGN channel and let the SNR be $\eta$. Assume without loss of generosity that each noise sample has a unit variance. Then the average power of a received codeword is less than $P = n(1 + \eta)$.  Sphere-packing seeks a codebook that has $M$ codewords that represent the center of spheres (possibly overlapped) that are packed inside the $n$-dimensional ball with radius $\sqrt{n(1 + \eta)}$.


\subsection{Pessimistic Sphere-Packing Under AWGN Channel}

This subsection uses the classic result by Minkowski in the sphere-packing argument: the packing density $\phi \geq c 2^{-n}$ for some constant $c>1$ in $\mathbb{R}^n$. We use Minkowski's result to simplify our analysis even though the best known result scales as $nc_n 2^{-n}$ asymptotically \cite{Sloane2002}. 
	
The following theorem states that when $\eta > 1$ the decoding time is finite a.s. and that the expected latency is also finite.
\begin{theorem}
	\label{thm:DecodingTime}
	Assume that there exists a rate-compatible code with radii $r_i$ that at least achieves the packing density $c2^{-n}$ in $\mathbb{R}^n$ and the $\eta > 1$. Let $N_i$, a subsequence of $\mathbb{N}$, be the blocklengths at each decoding attempt. Let the decoding time (also stopping time w.r.t. the natural filtration generated by $\{Z_i\}_i$) $\tau = \inf\limits_i \left\{\chi_{N_i}^2 < r_i^2 \right\}$ and let $L$ be the latency, then $\Pr(\tau\text{ is finite}) = 1$ and $\E L < \infty$.
\end{theorem}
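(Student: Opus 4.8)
The plan is to reduce both assertions to a single quantity—the exponentially small per-attempt failure probability—and then read off finiteness of $\tau$ and of $\E L$ as easy consequences. Write $A_i=\{\chi^2_{N_i}\ge r_i^2\}$ for the event that the $i$th decoding attempt fails; by definition the stopping time is $\tau=\inf\{i:A_i^c\}$, and since $\tau$ is a stopping time for the filtration generated by $\{Z_i\}$ the latency $L=N_\tau$ is well defined. The governing picture is a race between the accumulated noise energy $\chi^2_{N_i}=\sum_{k=1}^{N_i}z_k^2$, which concentrates around $N_i$, and the squared decoding radius $r_i^2$. I will show that under the hypotheses the radius wins by a fixed multiplicative margin, i.e.\ $r_i^2\ge\gamma N_i$ with $\gamma>1$ for all large $i$, from which $\Pr(A_i)$ decays exponentially in $N_i$; both conclusions follow immediately.

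First I would prove the geometric lemma, which is where the hypotheses on the density and on $\eta$ enter. With $M$ the (fixed) number of messages and $V_n$ the volume of the $n$-ball, the packing-density hypothesis pins $r_i$ through a relation of the form $M\,V_{N_i}(r_i)\ge c\,2^{-N_i}\,V_{N_i}\!\big(\sqrt{N_i(1+\eta)}\big)$, whence $r_i^2 = N_i(1+\eta)\big(c\,2^{-N_i}/M\big)^{2/N_i}$. Because $M$ is fixed and $N_i\to\infty$, the bracketed prefactor tends to a positive constant, so $r_i^2/N_i$ converges to a limit, and the hypotheses are exactly what force that limit to exceed $1$; choose $\gamma>1$ below it so that $r_i^2\ge\gamma N_i$ for $i\ge i_0$. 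Setting $c_i:=r_i^2/N_i\ge\gamma$ and applying the single-transmission Chernoff bound already optimized in \eqref{eqn:sumoptimal_u} with $u^{*}=(1-N_i/r_i^2)/2$ gives, since $x-1-\ln x$ is increasing for $x>1$,
\begin{align}
\Pr(A_i)=\Pr(\chi^2_{N_i}\ge r_i^2)\le \exp\!\Big(-\tfrac{N_i}{2}\big(c_i-1-\ln c_i\big)\Big)\le e^{-\kappa N_i},\nonumber
\end{align}
where $\kappa=\tfrac12(\gamma-1-\ln\gamma)>0$. This reuses the marginal bound underlying Lemma~\ref{lem:ChernoffTwoTrans} rather than any joint estimate.

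With this tail, both parts are short. For almost-sure finiteness, $\{\tau=\infty\}=\bigcap_i A_i$, so $\Pr(\tau=\infty)\le\inf_i\Pr(A_i)\le\inf_{i\ge i_0}e^{-\kappa N_i}=0$ because $N_i\to\infty$. For the expected latency I would use the layer-cake identity $\E L=\E N_\tau=\sum_i I_i\,\Pr(\tau\ge i)$ with $I_i=N_i-N_{i-1}$, together with $\Pr(\tau\ge i)\le\Pr(A_{i-1})\le e^{-\kappa N_{i-1}}$ and the elementary fact that a strictly increasing integer sequence satisfies $N_{i-1}\ge i-1$; under the natural assumption that the step sizes are uniformly bounded, $I_i\le I_{\max}$, this yields $\E L\le C_0+I_{\max}\sum_i e^{-\kappa(i-1)}<\infty$, where $C_0$ collects the finitely many terms with $i<i_0$. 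I expect the main obstacle to be the geometric lemma: converting the packing-density hypothesis into the clean bound $r_i^2\ge\gamma N_i$ with $\gamma>1$ and, in particular, pinning down the precise threshold on $\eta$, since the constants hidden in the $2^{-n}$ density and in the $n$-ball volume ratio must be tracked carefully and a naive volume count makes the margin $\gamma-1$ sensitive to them. A secondary point is that $\E L<\infty$ genuinely requires some control on the increments $I_i$—the exponential decay of $\Pr(A_i)$ cannot by itself absorb arbitrarily fast-growing step sizes—so I would either invoke boundedness of $I_i$ or a mild sub-exponential growth condition on $N_i$.
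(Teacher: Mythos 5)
Your proposal follows essentially the same route as the paper's proof sketch: convert the packing-density hypothesis into a margin $r_i^2 \geq \gamma N_i$ with $\gamma>1$, apply the single-transmission Chernoff bound with $u_i^{*}=(1-N_i/r_i^2)/2$ to get $\Pr(\chi^2_{N_i}\geq r_i^2)\leq e^{-\kappa N_i}$, and conclude both claims from the summability of the exponential tail (the paper invokes Borel--Cantelli; your $\Pr(\tau=\infty)\leq\inf_i\Pr(A_i)=0$ is an equivalent and even more elementary step for the a.s.\ part). Two of your side remarks are substantive rather than cosmetic: your volume count gives $(c\,2^{-N_i}/M)^{2/N_i}\to 1/4$ and hence $r_i^2/N_i\to(1+\eta)/4$, whereas the paper asserts $r_i^2\geq cN_i(1+\eta)/(2M^{2/N_i})$; with the density literally taken as $c\,2^{-n}$ your constant is the correct one, which would push the threshold to $\eta>3$ rather than $\eta>1$, so the paper's constant (or its stated threshold) deserves scrutiny. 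Likewise, your observation that $\E L<\infty$ requires some growth control on the increments $I_i$ is correct and is glossed over in the paper's ``$ne^{-c'n}$ is summable'' step.
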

	
	Since Minkowski's result is a lower bound on the packing density, Theorem \ref{thm:DecodingTime} also holds under the perfect packing assumption in the next subsection.

\subsection{Optimistic Sphere-Packing Under AWGN Channel} 
\label{sec:Examples}
This subsection briefly reviews the argument of obtaining optimistic sphere-packing radii used in \cite{Chen_2011_ICC} and provides numerical examples based on the sphere-packing radii. The largest sphere-packing radius perfectly packs $M$ spheres into the outer sphere. With this sphere-packing in mind, a conservation of volume argument yields the following inequality:
\begin{align*}
 \text{Vol}({\text{Inner sphere}}) \le \frac{{ \text{Vol}({\text{Outer sphere}})}}{{M}} 
\Rightarrow r_i^2 \le \frac{N_i(1 + \eta )}{M^{2/N_i}} .
\end{align*}

Based on the optimistic sphere-packing assumption, we give some examples of applying the bounds on joint error probability to obtain the latency versus throughput curve. 

Using the zero-error coding scheme described at the beginning of Section \ref{sec:PS}, the expected latency $\E L$ and throughput $\E R_t$ are given as follows (assuming $\Pr(\zeta_0) = 1$)
\begin{align}
\setcounter{equation}{10}
\E L = \frac{\sum\limits_{i = 1}^m I_i\Pr\left(\cap_{j \leq i-1} \zeta_j \right)}{1-\Pr\left(\cap_{j \leq m} \zeta_j \right)}, \E R_t = \log M/ \E L.
\end{align}

We apply Theorem \ref{thm:GeneralUpper} and its corollary to derive a tight lower bound on the joint error event. Rewrite the joint error event as $\bigcap_{i\leq m}\zeta_i =  \zeta_m\setminus \zeta_m\cap\left(\bigcap_{i\leq m-1}\zeta_i\right)^c$, which comes from the disjoint union of $\zeta_m$: $\cap_{i\leq m}\zeta_i \cup \zeta_m\left(\cap_{i\leq m-1}\zeta_i\right)^c =  \zeta_m$. By De Morgan's law
\begin{align}
 &\Pr\left(\bigcap_{i=1}^{m}\zeta_i\right) = \Pr(\zeta_m) - \Pr\left(\zeta_m\cap \bigcup_{i= 1}^{m-1}\zeta_i^c\right)
 \\
 \label{eqn:LowerEg}
 &\geq  \Pr(\zeta_m) - \sum\limits_{i = 1}^{m-1}\Pr\left(\zeta_m \cap \zeta_i^c\right)
\end{align}
where the last inequality can be seen as the union bound on the second term of the first equality.
Setting the parameter $u = 1/2 - N_m/(2r_m^2+2\log_2 M)$ in Theorem \ref{thm:GeneralUpper} with $m=2$ gives a fairly tight lower bound despite being suboptimal, as shown in Fig \ref{fig:m5compare}. We comment here that applying Theorem \ref{thm:InglotTwoTrans} may yield an even better bound but the evaluation of the bound is slightly more complex. 

To obtain an upper bound we rewrite the joint error probability up to $j$th transmission ($j \leq  m$) as 
\begin{align}
\setcounter{equation}{15}
 &\Pr\left(\bigcap_{i=1}^{j}\zeta_i\right) 
 = \Pr\left(\bigcap_{i=1}^{j}\zeta_i \cap \zeta_m\right) +\Pr\left(\bigcap_{i=1}^{j}\zeta_i \cap \zeta_m^c\right)
 \\\label{eqn:UpperEg}
 &\leq  \Pr(\zeta_m, \zeta_j) + \Pr(\zeta_j\cap \zeta_{j-1}) - \Pr(\zeta_j\cap\zeta_{j-1}\cap\zeta_m)
\end{align}
Applying Theorem \ref{thm:GeneralUpper} to the first two terms and Corollary \ref{cor:GeneralLower} to the last term gives an upper bound. 

We observed numerically that the inequality $\Pr(\cap_{i\leq m}\zeta_i) \leq \Pr(\zeta_m)$ gives surprisingly good bounds if the tail of a chi-square random variable is evaluated directly. Intuitively it says that given that the $m$th transmission is in error, most of the previous error events also occur with high probability. Although equation \eqref{eqn:UpperEg} could give a better upper bound in some cases, the difference is negligible. 

Fig. \ref{fig:m5compare} shows the latency versus throughput curve for exact numerical integration, the upper bound (by lower bound on the error probability) of \eqref{eqn:LowerEg} and the lower bound (by upper bound on the error probability) of \eqref{eqn:UpperEg}  with a maximum of $5$ transmissions and optimized step sizes based on \cite{Williamson_ISIT_2012}. The channel SNR is $2$ dB and the capacity is $0.6851$. The number of information bits for each point from left to right are $16, 32, 64, 128, 256$ respectively. The bounds are sharp as promised and the upper bound on the throughput curves up when the step sizes are too small such that the lower bounds on the error probability give trivial results. Using the inequality $\Pr(\cap_{i\leq m}\zeta_i) \leq \Pr(\zeta_m)$, Fig.~\ref{fig:m5compare} also shows the lower bound on the throughput with the step size of $1$ bit, which gives the best performance among all lower bounds. 

Fig. \ref{fig:StepsCompare} shows the upper and lower bounds using \eqref{eqn:LowerEg} and \eqref{eqn:UpperEg} with different step sizes (fixed for each number of information bits) at SNR $3$ dB. The upper and lower bounds are already very tight when the step size is $10$. The upper bound on throughput is above capacity and therefore not useful when the step sizes are too small. In a practical setting where the step sizes are optimized, however, the lower bounds may still provide useful insight. Also shown in the figure is the lower bound on the throughput with the finest increment (using the inequality $\Pr(\cap_{i\leq m}\zeta_i) \leq \Pr(\zeta_m)$). The throughput of $1$-bit increment follows similar trend as in Fig.~\ref{fig:m5compare}.

\begin{figure}[t!]
\centering
\includegraphics[width=0.5\textwidth]{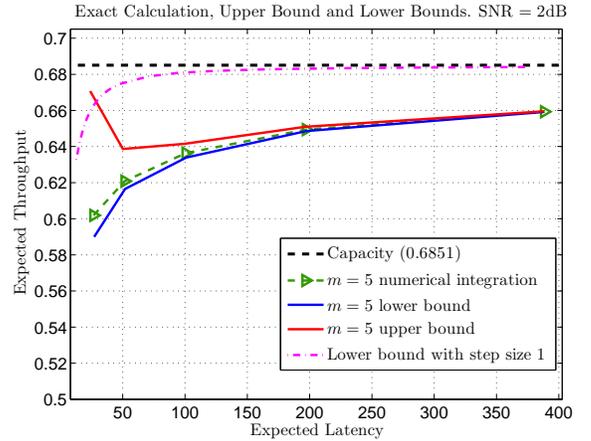}
\caption{The latency versus throughput curves of exact numerical integration, the upper and the lower bounds in the case of five transmissions on AWGN channel. Also shown in the figure is the best expected throughput by sending one bit increment at a time until the rate is well below capacity ($I_1 = \log_2M, I_i = 1, m = 3\log_2 M $, i.e., the lowest rate is $1/3$). The SNR is $2$ dB and the corresponding capacity is $0.6851$. }
\vspace*{-8pt}
\label{fig:m5compare}
\end{figure}

\begin{figure}[t!]
\centering
\includegraphics[width=0.5\textwidth]{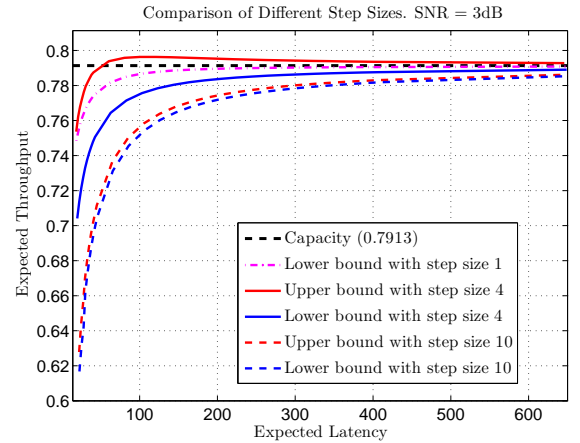}
\caption{The latency versus throughput curves of the upper and the lower bounds with difference step sizes at SNR $3$ dB. When the step sizes are small the Chernoff lower bounds on the joint error probability give trivial results and hence the throughputs go beyond the capacity. As the number of messages $M$ increases the upper and lower bound converges to the capacity. Also shown in the figure is the best expected throughput by using a step size of $1$ bit.}
\vspace*{-8pt}
\label{fig:StepsCompare}
\end{figure}

\section{Conclusions}
\label{sec:Conclusion}
This paper explores techniques to bound the performance suggested by the rate-compatible sphere-packing analysis. Using the Inglot Chernoff bounds, we derived two versions of upper and lower bounds of the relevant joint error event for the two-transmission case. The Inglot bounds are useful when the rate is slightly above capacity and the Chernoff bounds give cleaner expressions. We also presented general bounds on the $m$-transmissions case using Chernoff bounds. Numerical examples show that the bounds are tight when the step sizes are large enough. The achievable throughput with a step size of $1$ closely approaches capacity with very low latencies. This well-known yet still exciting result brings the performance of the classic coding scheme proposed by Shalkwijk and Kailath to a more practical ARQ-like coding scheme, which only requires one bit of feedback at each transmission.

We also show the finiteness of the decoding time and the expected latency using classic sphere-packing density result by Minkowski when $\eta > 1$, which also implies finiteness with perfect packing assumption.


\section*{Appendix}
This section provides the proofs of the Lemmas and Theorems in the previous sections. 
\begin{proof}[Proof of Lemma \ref{lem:ChernoffTwoTrans}]
Apply the Chernoff upper bound $\Pr(X>r) = \Pr(e^{uX}>e^{ur}) \leq \E[ e^{uX}]e^{-ur}$
to equation \eqref{eqn:Integrate1}. Let $z_i^j \equiv(z_i, z_{i+1}, \dots, z_{j}) , j > i$ and $A_{i}^j(r_k^2) = \left\{z_{i+1}^j:\|z_{i+1}^j\|^2 > r_k^2\right\}$, then
\begin{align}
&\eqref{eqn:Integrate1}\leq  \int_{r_1^2}^{\infty} \frac{\E e^{u\chi_{I_2}^2}f_{\chi_{I_1}^2}(t_1)}{e^{u(r_2^2 - t_1)}}  dt_1
\\ 
&= \int_{r_1^2}^{\infty} (1-2u)^{-I_2/2}e^{-u(r_2^2 - t_1)} f_{\chi_{I_1}^2}(t_1) dt_1
\\ \label{eqn:ChangeVar}
&=(1-2u)^{-I_2/2}e^{-u r_2^2}
\int_{A_0^{I_1}(r_1^2)}\frac{e^{-\frac{(1-2u)}{2}\sum\limits_{i = 1}^{I_1}z_i^2}}{(2\pi)^{I_1/2}} dz_1^{I_1} 
\\ \label{eqn:ChangeVar2}
&=\frac{\int_{A_0^{I_1}\left((1-2u)r_1^2\right)}\frac{e^{-\sum\limits_{i = 1}^{I_1}z_i'^2/2}}{(2\pi)^{I_1/2}} dz_1'^{I_1}}
{(1-2u)^{I_2/2}(1-2u)^{I_1/2}e^{u r_2^2}}
\\ \label{eqn:UpperBound}
&=\frac{e^{-u r_2^2}\Pr\left(\chi_{I_1}^2>(1-2u)r_1^2\right)}{(1-2u)^{N_2/2}}
\end{align}
where \eqref{eqn:ChangeVar2} follows from a change of variable $z_i' = (1+2u)^{1/2}z_i$. Taking the infimum over $u < 1/2$ gives the result. 

We sketch the proof for the lower bound due to space limitation. Observe that $\Pr(\zeta_1 \cap \zeta_2) = \Pr(\zeta_1) - \Pr(\zeta_1 \cap \zeta_2^c) = \Pr(\zeta_2) - \Pr(\zeta_1 \cap \zeta_2^c)$. Let $w_1 = \Pr(\zeta_1 \cap \zeta_2^c)$, $w_2 = \Pr(\zeta_1 \cap \zeta_2^c)$ and finding the upper bounds of them yield the lower bound. The upper bound on $w_2$ follows from the above derivation by changing the integration interval from $(r_1, \infty)$ to $[0 , r_1]$. For the upper bound on $w_1$, apply the Chernoff bound with the form $\Pr(X\leq r) = \Pr(e^{-vX}>e^{-vr}) \leq \E[ e^{-vX}]e^{vr}$. Taking the infimum over $v\geq 0$ gives the result.

\end{proof}

\begin{proof}[Proof of Theorem \ref{thm:InglotTwoTrans}]
Applying the lower bound of Theorem \ref{thm:Inglot} to equation \eqref{eqn:Integrate2} gives the lower bound. 


For the upper bound, note that the denominator of the upper bound in Theorem \ref{thm:Inglot} has a term $r_2^2 - I_2 + 2 -t$. Hence the bound can only be integrate over $r_1^2$ to $r_2^2 - I_2 + 2$. As the integration approaches $r_2^2 - I_2 + 2$ it's obvious that the bound become very loose. We may, however, split the integral into two parts and bound them separately. 
\begin{align}
	&\int_{r_1^2}^{r_2^2} \Pr(\chi_{I_2}^2 > r_2^2 - t) f_{\chi_{N_1}^2}(t) dt
	\\
	= & \int_{r_1^2}^{(1-\delta)r_2^2} \Pr(\chi_{I_2}^2 > r_2^2 - t) f_{\chi_{N_1}^2}(t)dt 
	\\
	  &+ \int_{(1-\delta)r_2^2}^{r_2^2} \Pr(\chi_{I_2}^2 > r_2^2 - t) f_{\chi_{N_1}^2}(t)dt 
	\\
	\leq & K(r_2, I_2, N_1) \int_{r_1^2}^{(1-\delta)r_2^2} g_{r_2, I_2, N_1}(t) dt 
	  + \int_{(1-\delta)r_2^2}^{r_2^2} f_{\chi_{N_1}^2}(t) dt .
\end{align}
Taking the infimum over $\delta \in (\underline{\delta}, \overline{\delta} )$ yields the upper bound:
\begin{align}
\inf_{\delta \in (\underline{\delta}, \overline{\delta} )} K \int_{r_1^2}^{(1-\delta)r_2^2} g(t) dt + \int_{(1-\delta)r_2^2}^{r_2^2} f_{\chi_{N_1}^2}(t) dt 
\end{align}
where $K, g(t)$ are functions of $r_2, I_2, N_1$:
\begin{align*}
&\overline{\delta} = \frac{r_2^2-r_1^2}{r_2^2}, \underline{\delta} = \frac{I_2-2}{r_2^2},
g_{r_2, I_2, N_1}(t) = \frac{t^{N_1/2-1}(r_2^2-t)^{I_2/2}}{r_2^2-I_2+2-t},
\\
&K(r_2, I_2, N_1) = \frac{e^{-\frac{1}{2}(r_2^2-I_2)}}{2^{N_1/2}\sqrt{\pi}I_2^{(I_2-1)/2}\Gamma(N_1/2)}.
\end{align*}
\end{proof}

\begin{proof}[Proof of Theorem \ref{thm:GeneralUpper} (sketch)]
Apply several rounds of similar steps (Chernoff upper bound and change of variable) as in the proof of Lemma \ref{lem:ChernoffTwoTrans}. 
\end{proof}



\begin{proof}[Proof of Theorem \ref{thm:DecodingTime} (sketch)]
	Given the assumption, the decoding radii $r_i$ have the following inequality:
	\begin{align}
	r_i^2 \geq \frac{c N_i(1+\eta)}{2M^{2/N_i}}, c>1.
	\end{align}	
	Note that for $\eta > 1$, $N_i/r_i^2 = c_i > 1$ for $i$ large enough. Applying the Chernoff upper bound with the optimal parameter $u_i^*$ gives a positive error exponent and the result follows from the Borel-Cantelli's lemma since $e^{-c'n}, ne^{-c'n}$ are both summable in $n$ for some $c'>0$. 
\end{proof}


\normalsize
\bibliographystyle{IEEEtran}
\bibliography{IEEEabrv,ITW_2012}

\begin{thebibliography}{10}
\providecommand{\url}[1]{#1}
\csname url@samestyle\endcsname
\providecommand{\newblock}{\relax}
\providecommand{\bibinfo}[2]{#2}
\providecommand{\BIBentrySTDinterwordspacing}{\spaceskip=0pt\relax}
\providecommand{\BIBentryALTinterwordstretchfactor}{4}
\providecommand{\BIBentryALTinterwordspacing}{\spaceskip=\fontdimen2\font plus
\BIBentryALTinterwordstretchfactor\fontdimen3\font minus
  \fontdimen4\font\relax}
\providecommand{\BIBforeignlanguage}[2]{{%
\expandafter\ifx\csname l@#1\endcsname\relax
\typeout{** WARNING: IEEEtran.bst: No hyphenation pattern has been}%
\typeout{** loaded for the language `#1'. Using the pattern for}%
\typeout{** the default language instead.}%
\else
\language=\csname l@#1\endcsname
\fi
#2}}
\providecommand{\BIBdecl}{\relax}
\BIBdecl

\bibitem{Chang_1956}
S.~Chang, ``{Theory of information feedback systems},'' \emph{{IEEE} Trans.
  Inf. Theory}, vol. PGIT-2, pp. 29--40, Sep. 1956.

\bibitem{Kramer_1969}
A.~Kramer, ``{Improving communication reliability by use of an intermittent
  feedback channel},'' \emph{{IEEE} Trans. Inf. Theory}, vol. IT-15, no.1, pp.
  52--60, Jan. 1969.

\bibitem{Zig_1970}
K.~S. Zigangirov, ``{Upper bounds for the error probability for channels with
  feedback},'' \emph{{Probl. Pered. Inform.}}, vol. 6, no.1, pp. 87--92.

\bibitem{Schalkwijk_1966_1}
J.~Schalkwijk and T.~Kailath, ``{A coding scheme for additive noise channel
  with feedback--I: No bandwidth constraint},'' \emph{{IEEE} Trans. Inf.
  Theory}, vol. IT-12, no.2, pp. 172--182, Apr. 1966.

\bibitem{Schalkwijk_1966_2}
J.~Schalkwijk, ``{A coding scheme for additive noise channel with feedback--II:
  Band-limited signals},'' \emph{{IEEE} Trans. Inf. Theory}, vol. IT-12, no.2,
  pp. 183--189, Apr. 1966.

\bibitem{Gallager_2010}
R.~G. Gallager, ``{Variations on a theme by Schalkwijk and Kailath},''
  \emph{{IEEE} Trans. Inf. Theory}, vol. 56, no.1, pp. 6--17, Jan. 2010.

\bibitem{PolyIT11}
Y.~Polyanskiy, H.~V. Poor, and S.~Verd\'{u}, ``Feedback in the non-asymptotic
  regime,'' \emph{{IEEE} Trans. Inf. Theory}, vol. 57(8), pp. 4903--4925, Aug.
  2011.

\bibitem{Chen_2010_ITA}
T.-Y. Chen, B.-Z. Shen, and N.~Seshadri, ``{Is feedback a performance equalizer
  of classic and modern codes?}'' in \emph{{ITA Workshop}}, San Diego, CA, USA,
  Feb. 2010.

\bibitem{Chen_2011_ICC}
T.-Y. Chen, N.~Seshadri, and R.~D. Wesel, ``A sphere-packing analysis of
  incremental redundancy scheme with feedback,'' in \emph{{Proc. IEEE Intl.
  Conf. Comm.}}, Kyoto, Japan, 2011.

\bibitem{Williamson_ISIT_2012}
A.~R. Williamson, T.-Y. Chen, and R.~D. Wesel, ``A rate-compatible
  sphere-packing analysis of feedback coding with limited retransmissions,'' in
  \emph{Proc. 2010 IEEE Int. Symp. Inf. Theory (ISIT)}, Jul 2012.

\bibitem{Inglot2006}
T.~Inglot and T.~Ledwina, ``Asymptotic optimality of a new adaptive test in
  regression model,'' \emph{Ann. Inst. H. Poincare}, vol.~42, pp. 579--590,
  2006.

\bibitem{ChenDraft2012}
T.-Y. Chen, A.~R. Williamson, and R.~D. Wesel, ``Rate-compatible sphere-packing
  analysis,'' \emph{draft}, 2012.

\bibitem{Sloane2002}
N.~Sloane, ``The sphere packing problem,'' \emph{arXiv:math/0207256v1
  [math.CO]}, 2002.

\end{thebibliography}

\end{document}